\newtheorem{Theorem}{Theorem}
\newtheorem{Cor}{Corollary}
\newtheorem{Lemma}{Lemma}
\def\refa#1{(\ref{#1})}
\def\ra{\rightarrow}
\def\R{\mathbb{R}}
\def\RTX{\R_+^{1+3}}
\def\Linf{{L^\infty}}
\def\Linftx#1{{L^\infty_{1,#1}}}
\def\LinfTX{{L^\infty(\RTX)}}
\def\<{\langle}
\def\>{\rangle}
\def\norm#1{\<#1\>}
\def\d{\partial}
\def\lap{\Delta}
\def\la{\lambda}
\begin{document}
%%%%%%%%%%%%%%%%%%%%%%%%%%%%%%%%%%%%%%%%%%%%%%%%%%%%%%%%%%%%%%%%%%%%%%%%%%

\title{Simple proof of a useful pointwise estimate for the wave equation}

\author{Nikodem Szpak}
%\email[]{Your e-mail address}
%\homepage[]{Your web page}
%\thanks{}
%\altaffiliation{}
\affiliation{Max-Planck-Institut f{\"u}r Gravitationsphysik, Albert-Einstein-Institut,
Golm, Germany}
\date{\today}

\begin{abstract}
We give a simple proof of a pointwise decay estimate stated in two versions, making advantage of a particular simplicity of inverting the spherically symmetric wave operator and of the comparison theorem. We briefly explain the role of this estimate in proving decay estimates for nonlinear wave equations or wave equations with potential terms.
\end{abstract}

% insert suggested PACS numbers in braces on next line
%\pacs{}
% insert suggested keywords - APS authors don't need to do this
%\keywords{}

%\maketitle must follow title, authors, abstract, \pacs, and \keywords
\maketitle
%%%%%%%%%%%%%%%%%%%%%%%%%%%%%%%%%%%%%%%%%%%%%%%%%%%%%%%%%%%%%%%%%%%%%%%%%%

%%%%%%%%%%%%%%%%%%%%%%%%%%%%%%%%%%%%%%%%%%%%%%%%%%%%%%%%%%%%%%%%%%%%%%%%%%
\section{Introduction}
%%%%%%%%%%%%%%%%%%%%%%%%%%%%%%%%%%%%%%%%%%%%%%%%%%%%%%%%%%%%%%%%%%%%%%%%%%

%%%%%%%%%%%%%%%%%%%%%%%%%%%%%%%%%%%%%%%%%%%%%%%%%%%%%%%%%%%%%%%%%%%%%%%%%%
\subsection{Main result}

We consider a wave equation with  source
\begin{equation} \label{wave-eq}
   \Box \phi(t,x) = F(t,x)
\end{equation}
in 3 spatial dimensions, where $\Box\equiv \d_t^2 - \lap$, subject to null initial data $(\phi,\d_t \phi)(0,x)=(0,0)$.
We will use the notation $\norm{x}:=1+|x|$.
The main result of this note is a simple proof of the following very useful estimate
\begin{Lemma} \label{Lem:decay}
If 
\begin{equation} \label{source-bound}
   |F(t,x)|\leq \frac{A}{\norm{t+|x|}^p\norm{t-|x|}^q}
\end{equation}
with $p>2$, $q>1$ then
\begin{equation} 
   |\phi(t,x)|\leq \frac{C}{\norm{t+|x|}\norm{t-|x|}^{p-2}}
\end{equation}
with some positive constant $C$.
\end{Lemma}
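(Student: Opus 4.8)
The plan is to reduce everything to a one-dimensional integral by exploiting spherical symmetry. Since $\Box$ commutes with spatial rotations and the initial data vanish, it suffices to treat the case where $|F(t,x)|$ is bounded by the (radially symmetric) right-hand side of \refa{source-bound}; by the comparison theorem (positivity of the fundamental solution / Duhamel representation in 3D), $|\phi(t,x)|$ is then bounded by the solution $\psi(t,r)$, $r=|x|$, of the radial problem $\Box\psi = A\norm{t+r}^{-p}\norm{t-r}^{-q}$ with zero data. So first I would state this reduction carefully, citing the comparison principle from the excerpt.

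Next I would use the classical fact that in 3D the spherically symmetric wave operator factorizes: with $u:=r\psi$ one has $(\d_t^2-\d_r^2)u = r\cdot(\text{source})$, and the forward solution with zero data is given by integrating the source over the backward characteristic triangle,
\[
   u(t,r) = \frac12\int_0^t\!\!\int_{|r-(t-s)|}^{r+(t-s)} \rho\, G(s,\rho)\,d\rho\,ds,
\]
where $G$ is the source bound. Equivalently, changing to null coordinates $\xi=s+\rho$, $\eta=s-\rho$, the double integral becomes a product-type integral over a rectangle-like region, with $\rho\,d\rho\,ds = \tfrac14(\xi-\eta)\,d\xi\,d\eta$. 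This is the step where the "particular simplicity of inverting the spherically symmetric wave operator" advertised in the abstract is used.

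Then the core of the argument is the estimate
\[
   |\psi(t,r)| = \frac{|u(t,r)|}{r}
   \le \frac{C_A}{r}\int\!\!\int \frac{(\xi-\eta)\,d\xi\,d\eta}{\norm{\xi}^p\norm{\eta}^q},
\]
carried out over the appropriate range of $\xi,\eta$ determined by the characteristic triangle through $(t,r)$. I would split into the regions $r\le t/2$ (interior) and $r\ge t/2$ (near light cone / exterior), and in each region bound the $\eta$-integral using $q>1$ (so $\int \norm{\eta}^{-q}d\eta$ converges and contributes a factor $\norm{t-r}^{-(q-1)}$ or a constant) and the $\xi$-integral using $p>2$ (so $\int \norm{\xi}^{1-p}d\xi$ converges and produces the $\norm{t+r}^{-(p-2)}$ gain, with the leftover $1/r \sim 1/\norm{t+r}$ near the cone). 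Keeping track of how the prefactor $1/r$ combines with the $(\xi-\eta)$ weight to yield exactly $\norm{t+|x|}^{-1}\norm{t-|x|}^{-(p-2)}$ is the bookkeeping heart of the proof.

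The main obstacle I anticipate is the region near the light cone where $|r|\approx t$ and $r$ itself is not small: there the naive bound $(\xi-\eta)\le \xi$ loses the decay in $\norm{t-r}$, so one must exploit that on the relevant part of the integration region $\eta$ ranges only up to $O(\norm{t-r})$, and that the factor $1/r$ is comparable to $\norm{t+r}^{-1}$; reconciling these to avoid logarithmic losses (which is exactly why the hypotheses are the strict inequalities $p>2$, $q>1$ rather than equalities) is the delicate point. A secondary, more technical nuisance is handling the small-$r$ behavior so that the factor $1/r$ in $u/r$ does not produce a spurious singularity — this is automatic because $u(t,0)=0$ and $u$ is smooth, but it should be noted.
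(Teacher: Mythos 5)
Your plan is essentially the paper's proof: reduce to a spherically symmetric majorant via the comparison theorem, pass to $r\varphi$ in null coordinates, and integrate the explicit kernel over the characteristic triangle, with $q>1$ controlling the $v$-integral and $p>2$ controlling the $u$-integral, whose \emph{lower} limit $|t-r|$ is what supplies the $\norm{t-|x|}^{-(p-2)}$ decay (note the gain sits on $t-r$, not on $t+r$ as written in your outline). The only real difference is bookkeeping: the paper avoids your case split $r\lessgtr t/2$ by using the elementary inequality $1-x^{\nu}\le\max(1,\nu)(1-x)$ to extract a factor $\min(t,r)/\norm{t+r}$ that cancels the $1/r$ uniformly and handles both the interior and the near-cone regions at once.
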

It extends to a slightly more complicated and also very useful estimate
\begin{Lemma} \label{Lem:decay-x}
If 
\begin{equation} \label{source-bound-x}
   |F(t,x)|\leq \frac{A}{\norm{x}^\la\norm{t+|x|}^p\norm{t-|x|}^q}
\end{equation}
with $p>0$, $q>1, \la>2$ then
\begin{equation} 
   |\phi(t,x)|\leq \frac{C}{\norm{t+|x|}\norm{t-|x|}^{\nu}}
\end{equation}
with $\nu:=p+\min(q,\la-1)-1$ and some positive constant $C$.
\end{Lemma}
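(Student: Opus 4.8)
The plan is to reduce the assertion to a pointwise bound for an explicit, positivity-preserving integral and then to establish that bound by a short region-by-region estimate. Since the hypothesis controls $|F|$ pointwise by the radial majorant $\bar F(t,r):=A\,\langle r\rangle^{-\la}\langle t+r\rangle^{-p}\langle t-r\rangle^{-q}$, and the retarded fundamental solution of $\Box$ in three space dimensions is a positive measure (Kirchhoff's formula), the comparison theorem gives $|\phi(t,x)|\le\bar\phi(t,|x|)$, where $\bar\phi$ is the spherically symmetric solution of $\Box\bar\phi=\bar F$ with null data. Because $\bar\phi$ is radial, the reduced unknown $\bar\psi:=r\bar\phi$ solves $(\d_t^2-\d_r^2)\bar\psi=r\bar F$ on $\{r\ge0\}$ with $\bar\psi(t,0)=0$ and null data; inverting $\d_t^2-\d_r^2$ by d'Alembert's formula with a reflection at $r=0$ then gives, writing $\Delta(t,r):=\{(s,\rho):s\ge0,\ |r-(t-s)|\le\rho\le r+(t-s)\}$, the closed expression
\begin{equation}
  \bar\phi(t,r)=\frac{1}{2r}\iint_{\Delta(t,r)}\rho\,\bar F(s,\rho)\,ds\,d\rho .
\end{equation}
It remains to bound the right-hand side by $C\langle t+r\rangle^{-1}\langle t-r\rangle^{-\nu}$.

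Next I would pass to the characteristic coordinates $q_\pm:=s\pm\rho$ (abbreviating $\alpha:=t+r$, $\beta:=t-r$). The Jacobian is constant, the spatial weight $\rho\langle\rho\rangle^{-\la}$ becomes the explicit function $\tfrac12(q_+-q_-)\langle\tfrac12(q_+-q_-)\rangle^{-\la}$ of $q_+-q_-\ge0$, and $\Delta(t,r)$ becomes the domain $D(t,r)=\{(q_+,q_-):q_-\le\beta,\ \max(|q_-|,\beta)\le q_+\le\alpha\}$. Two features of $D(t,r)$ drive the estimate. First, the constraint $q_+\ge\beta$, which is just the triangle inequality $|x-y|\le|x|+|y|$ read along a backward characteristic, lets one pull a factor $\langle t-r\rangle^{-p}$ out of $\langle q_+\rangle^{-p}$ whenever $t\ge|x|$ (in the exterior $r>t$ one instead uses $q_-\le\beta<0$ to pull $\langle t-r\rangle^{-q}$ out of $\langle q_-\rangle^{-q}$). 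Second, $D(t,r)$ is thin in the relevant directions: when $q_-\ge-\beta$ the $q_+$-interval has length at most $\alpha-\beta=2r$, and the complementary $q_-$-range $[-\alpha,-\beta]$ again has length $2r$. Carrying out the $q_+$-integration first reduces everything to a one-dimensional integral in $q_-$ of the schematic form $\langle t-r\rangle^{-p}\int\langle q_-\rangle^{-q}\,h(q_-)\,dq_-$, with $h$ collecting what is left of the spatial weight together with the length of the $q_+$-slice.

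The estimate then splits according to where $(t,x)$ sits: the exterior $r\ge t$, the wave zone $r\approx t$, the deep interior $r\le t/2$, and a routine intermediate band. In the exterior and the wave zone the factor $\langle t+r\rangle^{-1}$ is simply the prefactor $1/r$, which is comparable to $\langle t+r\rangle^{-1}$ there, while $\bar\psi$ stays bounded because $\int_0^\infty\rho\langle\rho\rangle^{-\la}\,d\rho<\infty$ (this is where $\la>2$ is used) and $\int_{\mathbb R}\langle q_-\rangle^{-q}\,dq_-<\infty$ (this is where $q>1$ is used). In the interior the exponent $\nu$ is produced like this: along the portion of $D(t,r)$ near the light cone one has $q_+\approx\beta$, hence $\rho\approx\tfrac12(\beta-q_-)$ and spatial weight $\lesssim\langle\beta-q_-\rangle^{-(\la-1)}$, so the $q_-$-integral is a convolution-type quantity $\bigl(\langle\,\cdot\,\rangle^{-q}*\langle\,\cdot\,\rangle^{-(\la-1)}\bigr)(\beta)\lesssim\langle t-r\rangle^{-\min(q,\la-1)}$, the estimate being valid precisely because $q>1$ and $\la-1>1$; multiplying by the extracted $\langle t-r\rangle^{-p}$ and by the length $\sim r$ of the $q_+$-slice yields $\bar\psi\lesssim r\,\langle t-r\rangle^{-1-\nu}$, hence $\bar\phi\lesssim\langle t-r\rangle^{-1-\nu}\approx\langle t+r\rangle^{-1}\langle t-r\rangle^{-\nu}$ since $\langle t-r\rangle\approx\langle t+r\rangle$ in the deep interior; the contribution of $q_-\in[-\alpha,-\beta]$ is handled the same way, using that this range has width $2r$, and turns out to decay strictly faster than needed.

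The genuine work --- and the step I expect to be the main obstacle --- is this bookkeeping: deciding, in each sub-region, which of the $\la$-weight and the $q$-weight carries the decay, and checking that the prefactor $1/r$ is always absorbed, which it is precisely because the relevant slice of $D(t,r)$ has size $\sim r$ whenever $1/r$ would otherwise blow up. One also has to dispose of the borderline exponents $p\le1$, where $\langle q_+\rangle^{-p}$ is no longer integrable but is only ever pulled out, never integrated, and $2<\la\le3$, where a few one-dimensional integrals pick up a logarithm that is comfortably absorbed by the slack in the remaining factors. The exterior and wave-zone cases strictly over-satisfy the claimed bound, so it is the interior regime that pins down the exponents $1$ and $\nu$.
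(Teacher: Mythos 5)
Your reduction is exactly the paper's: majorize $|F|$ by the radial source $G$, invoke the comparison theorem, pass to $\psi=r\varphi$ in null coordinates, and estimate the resulting double integral over the backward characteristic triangle. Where you diverge is the endgame. The paper integrates in $q_-$ first at fixed $q_+=u'$, obtaining the single uniform bound $\d_{u'}\psi\lesssim \langle u'\rangle^{-(p+\mu)}$ with $\mu=\min(q,\lambda-1)$ (your convolution estimate $\langle\cdot\rangle^{-q}*\langle\cdot\rangle^{-(\lambda-1)}\lesssim\langle\cdot\rangle^{-\mu}$ is exactly its integral $I_2$), and then extracts everything else at once --- the $\langle t-r\rangle^{-\nu}$ decay from the lower limit $|t-r|$ of the $u'$-integration, the $\langle t+r\rangle^{-1}$ decay, and the cancellation of the $1/r$ prefactor --- from the elementary inequality $1-x^{\nu}\le\max(1,\nu)(1-x)$ applied to $\int_{|v|}^{u}\langle u'\rangle^{-(p+\mu)}du'$. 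That one step is valid in the interior, the wave zone and the exterior simultaneously, so no region splitting is ever needed. Your ordering (do the $q_+$-slice first, then split into four regions) reaches the same exponents but at the price of precisely the bookkeeping you flag as the main obstacle.

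That bookkeeping also has one concrete soft spot as written. In the exterior you propose to ``pull $\langle t-r\rangle^{-q}$ out of $\langle q_-\rangle^{-q}$'': extracting the full power leaves a non-integrable remainder over a $q_-$-range of length $2t$, while extracting only $q-1-\epsilon$ powers falls short of $\nu=p+\min(q,\lambda-1)-1$ (already when $\min(q,\lambda-1)=q$, since $p>0$). To close the exterior you must also spend the $q_+$-weight --- note $q_+\ge|q_-|\ge|t-r|$ there, so $\langle t-r\rangle^{-p}$ is still available --- together with the tail $\int_{|q_-|}^{\infty}\rho\,\langle\rho\rangle^{-\lambda}\,d\rho\lesssim\langle q_-\rangle^{-(\lambda-2)}$ of the spatial weight; the combination does over-satisfy the claimed bound, as you assert, but not for the reason you give. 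Likewise ``$\bar\psi$ stays bounded'' is not enough in the portions of the wave zone and exterior where $|t-r|$ is large. These defects are repairable within your scheme, but the paper's order of integration makes them moot.
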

Observe that in case $q\geq \la-1$ we have $\nu=p+\la-2$ and lemma \ref{Lem:decay-x} implies lemma \ref{Lem:decay}, because condition \refa{source-bound-x} implies condition \refa{source-bound} with $p\ra p+\la$.

The techniques of both proofs are based on the well known comparison theorem.
\begin{Theorem} \label{Th:comparison}
If $\phi_1, \phi_2$ solve 
\begin{equation} 
   \Box \phi_1 = F_1,\qquad \Box \phi_2 = F_2
\end{equation}
with null initial data and $F_1(t,x)\leq F_2(t,x)$ for all $(t,x)\in\RTX$ then
\begin{equation} 
   \phi_1(t,x) \leq \phi_2(t,x)\qquad \forall (t,x)\in\RTX.
\end{equation}
\end{Theorem}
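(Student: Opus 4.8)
The plan is to deduce the comparison principle from the nonnegativity of the retarded fundamental solution of $\Box$ in $1+3$ dimensions, i.e.\ from the explicit Duhamel/Kirchhoff representation. First I would use linearity: set $\psi:=\phi_2-\phi_1$ and $G:=F_2-F_1$, so that $\Box\psi=G$ with null initial data and, by hypothesis, $G\ge 0$ on $\RTX$. The whole theorem then reduces to the implication: null data together with $G\ge 0$ forces $\psi\ge 0$.

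For that step I would write the solution of $\Box\psi=G$, $(\psi,\d_t\psi)(0,\cdot)=(0,0)$, in the standard closed form
\begin{equation}
  \psi(t,x)=\frac{1}{4\pi}\int_{|x-y|\le t}\frac{G\big(t-|x-y|,\,y\big)}{|x-y|}\,dy ,
\end{equation}
which can equivalently be written as the iterated spherical integral $\frac{1}{4\pi}\int_0^t\frac{1}{s}\int_{|x-y|=s}G(t-s,y)\,dS(y)\,ds$. The key observation is that the kernel $1/(4\pi|x-y|)$ — equivalently the distributional fundamental solution $\delta(t-|x-y|)/(4\pi|x-y|)$ — is strictly positive, and the domain of integration is exactly the backward solid light cone from $(t,x)$. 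Hence the integrand is pointwise $\ge 0$ whenever $G\ge 0$, so $\psi(t,x)\ge 0$ for every $(t,x)\in\RTX$, i.e.\ $\phi_1(t,x)\le\phi_2(t,x)$, as claimed. As a by-product this encodes finite speed of propagation: the sign (and value) of $\psi$ at $(t,x)$ depends only on $G$ in that backward cone.

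The only point that needs a word is the precise meaning of ``solution'', since the sources relevant to Lemmas~\ref{Lem:decay} and~\ref{Lem:decay-x} are merely bounded by decaying functions and not assumed smooth. I would take the displayed representation itself as the definition of the solution with null data (it is well defined because $G$ is locally integrable), note that it reproduces the classical solution for smooth, say compactly supported, $G$, and recall that uniqueness in the natural energy classes is immediate — the difference of two such solutions solves $\Box w=0$ with zero data and locally finite energy, hence vanishes by the standard energy estimate. I do not anticipate any real obstacle here: the entire content of Theorem~\ref{Th:comparison} is the positivity of the fundamental solution of the wave operator in three space dimensions, which requires no estimate and no clever test function, only writing down the Duhamel integral and reading off the sign of its kernel.
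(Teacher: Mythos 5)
Your proposal is correct and is essentially the paper's own argument made explicit: the paper simply asserts that $\Box^{-1}$ has a positive kernel (is a measure) on $\RTX$ and applies it to $\phi_2-\phi_1=\Box^{-1}(F_2-F_1)$, while you write out that kernel concretely as the retarded Kirchhoff/Duhamel integral $\frac{1}{4\pi}\int_{|x-y|\le t}|x-y|^{-1}G(t-|x-y|,y)\,dy$ and read off its sign. The extra remarks on the meaning of ``solution'' and uniqueness are sensible but not needed beyond what the paper already takes for granted.
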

\begin{proof}
It follows from the fact that $\Box^{-1}$ is a measure on $\RTX$ and hence as an operator has a positive kernel. Then, $\phi_2-\phi_1 = \Box^{-1}(F_2-F_1) \geq 0$ when $F_2-F_1\geq 0$.
\end{proof}
Its immediate consequence is a bound
\begin{Cor} \label{Cor:comparison}
If $\phi_1, \phi_2$ solve 
\begin{equation} 
   \Box \phi_1 = F_1,\qquad \Box \phi_2 = F_2 \geq 0
\end{equation}
with null initial data and $|F_1(t,x)|\leq F_2(t,x)$ for all $(t,x)\in\RTX$ then
\begin{equation} 
   |\phi_1(t,x)| \leq \phi_2(t,x)\qquad \forall (t,x)\in\RTX.
\end{equation}
\end{Cor}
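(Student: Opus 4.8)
The plan is to derive the two-sided bound $-\phi_2 \le \phi_1 \le \phi_2$ on $\RTX$ by applying Theorem~\ref{Th:comparison} twice, using the linearity of the solution map $\Box^{-1}$ to handle the sign.

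First I would record the elementary fact that $\phi_2 \ge 0$ everywhere. The zero function solves $\Box\, 0 = 0$ with null initial data, and by hypothesis $0 \le F_2(t,x)$ for all $(t,x)\in\RTX$; Theorem~\ref{Th:comparison} applied to the pair $(0,\phi_2)$ then yields $0 \le \phi_2(t,x)$ on $\RTX$. Next, from $|F_1(t,x)| \le F_2(t,x)$ I read off the pointwise chain $-F_2(t,x) \le F_1(t,x) \le F_2(t,x)$. Applying Theorem~\ref{Th:comparison} to $(\phi_1,\phi_2)$, whose sources satisfy $F_1 \le F_2$, gives $\phi_1 \le \phi_2$. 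For the reverse inequality, observe that $-\phi_2$ solves $\Box(-\phi_2) = -F_2$ with null initial data (by linearity of $\Box$ together with uniqueness for the Cauchy problem, equivalently linearity of the kernel $\Box^{-1}$); since $-F_2 \le F_1$, Theorem~\ref{Th:comparison} applied to the pair $(-\phi_2,\phi_1)$ gives $-\phi_2 \le \phi_1$. Combining the two inequalities yields $-\phi_2(t,x) \le \phi_1(t,x) \le \phi_2(t,x)$, i.e. $|\phi_1(t,x)| \le \phi_2(t,x)$ for all $(t,x)\in\RTX$, which is the assertion.

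I do not expect any real obstacle here: the statement is a formal consequence of linearity and of the positivity of the kernel already exploited in the proof of Theorem~\ref{Th:comparison}. The only step meriting a word of care is the identification of the solution with source $-F_2$ as exactly $-\phi_2$, which follows from the well-posedness of the linear homogeneous Cauchy problem for $\Box$. This two-sided comparison is precisely the device that will later allow us to replace $|F|$ by an explicit nonnegative majorant and thereby reduce Lemmas~\ref{Lem:decay} and~\ref{Lem:decay-x} to the construction of a single explicit supersolution.
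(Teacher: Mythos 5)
Your argument is correct and is essentially the paper's own proof: two applications of Theorem~\ref{Th:comparison} combined with linearity of $\Box^{-1}$, the only cosmetic difference being that you negate $\phi_2$ (comparing $-\phi_2$ with $\phi_1$ via $-F_2\le F_1$) where the paper negates $\phi_1$ (comparing $-\phi_1$ with $\phi_2$ via $-F_1\le F_2$). The auxiliary observation $\phi_2\ge 0$ also appears in the paper's proof in the same form.
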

\begin{proof}
It is enough to use theorem \ref{Th:comparison} twice, once with $F_1(t,x)\leq F_2(t,x)$ and once with $-F_1(t,x)\leq F_2(t,x)$. We obtain then $\phi_1(t,x)\leq \phi_2(t,x)$ and $-\phi_1(t,x)\leq \phi_2(t,x)$ and hence $|\phi_1(t,x)|\leq \phi_2(t,x)$.

Observe, that one can use also theorem \ref{Th:comparison} with $0\leq F_2(t,x)$ to show that $0\leq \phi_2(t,x)$.
\end{proof}

Alinhac has used a similar technique in \cite{Alinhac-SemilinSys} for proving a different estimate of this type. Previously, in \cite{John-blowup, Asakura, Strauss-T, Georg-H-K, NS-WaveDecay}, more involved techniques have been used, based essentially on the integral (Duhamel) representation of the exact solution and estimation of the integrals over light-cones in space-time.

%%%%%%%%%%%%%%%%%%%%%%%%%%%%%%%%%%%%%%%%%%%%%%%%%%%%%%%%%%%%%%%%%%%%%%%%%%
\subsection{Idea of the proof} \label{Sec:Idea}

The main idea of the proofs of lemma \ref{Lem:decay} and lemma \ref{Lem:decay-x} is that we estimate the source $|F(t,x)|$ by a source $G(t,r)$ which is spherically symmetric. For this source we can much more easily estimate the solutions $\varphi(t,r)$ of the corresponding equation \refa{wave-eq}, which are spherically symmetric as well. Finally, the obtained bound on $\varphi(t,r)$ presents, by corollary \ref{Cor:comparison}, a bound on $\phi(t,x)$, too. Below, we explain below the main steps of the proof of lemma \ref{Lem:decay}. The spirit of the proof of lemma \ref{Lem:decay-x} is the same.

In spherical symmetry (in 3 spatial dimensions) the wave equation reads
\begin{equation} \label{wave-eq-r}
   \d_t^2 \varphi - \frac{1}{r}\d_r^2(r \varphi) = G,
\end{equation}
with spherically symmetric source
\begin{equation} \label{source-bound-r}
   G(t,r) = \frac{A}{\norm{t+r}^p\norm{t-r}^q},
\end{equation}
what guarantees that $\varphi$ is also spherically symmetric.
Introducing new coordinates $u=t+r$ and $v=t-r$ with $u\geq 0$, $|v|\leq u$, and new variables $\psi(u,v):=r\, \varphi(t,r)$ and $H(u,v):=r\, G(t,r)$ we get
\begin{equation}
   4\, \d_v \d_u \psi(u,v) = H(u,v).
\end{equation}
This equation can be first integrated to
\begin{equation} 
   \d_u \psi(u,v) = \frac{1}{4}\int_{-u}^v H(u,v') dv' + (\d_u \psi)(u,-u),
\end{equation}
where the last term vanishes on the basis of the initial conditions
\begin{equation}
  (\d_u \psi)(u,-u) = 
  \frac{1}{2}\varphi(0,u) + \frac{u}{2}(\d_t\varphi)(0,u) + \frac{u}{2}(\d_r\varphi)(0,u) = 0.
\end{equation}
Then we can integrate again and find
\begin{equation} 
   \psi(u,v) = \frac{1}{4} \int_{|v|}^u \left(\int_{-u'}^v H(u',v') dv'\right) du' + \psi(|v|,v),
\end{equation}
where the last term vanishes again because $\psi(|v|,v)=\frac{|v|-v}{2}\varphi\left(\frac{|v|+v}{2},\frac{|v|-v}{2}\right)$ what is equal to either $0\cdot \varphi(v,0)=0$ for $v\geq 0$ or $|v| \varphi(0,|v|)$ for $v<0$, which is zero on the basis of the initial conditions.

The double integral can be relatively easy estimated, because $H$ has a very simple form in variables $(u,v)$
\begin{equation} 
  H(u,v)=\frac{u-v}{2}\;G\left(\frac{u+v}{2},\frac{u-v}{2}\right) =
  \frac{A}{2} \frac{u-v}{\norm{u}^p\norm{v}^q}
\end{equation}

%%%%%%%%%%%%%%%%%%%%%%%%%%%%%%%%%%%%%%%%%%%%%%%%%%%%%%%%%%%%%%%%%%%%%%%%%%
\section{Proofs}
%%%%%%%%%%%%%%%%%%%%%%%%%%%%%%%%%%%%%%%%%%%%%%%%%%%%%%%%%%%%%%%%%%%%%%%%%%
\subsection{Proof of lemma \ref{Lem:decay}}
%%%%%%%%%%%%%%%%%%%%%%%%%%%%%%%%%%%%%%%%%%%%%%%%%%%%%%%%%%%%%%%%%%%%%%%%%%
\begin{proof}
As explained in section \ref{Sec:Idea} we first prove an estimate for the spherically symmetric $\varphi(t,r)$ solving \refa{wave-eq-r} with the source $G(t,r)$ satisfying \refa{source-bound-r}.
With the notation as in section \ref{Sec:Idea} we have
\begin{equation}
  \d_u \psi(u,v) = 
  \frac{A}{8} \int_{-u}^v \frac{u-v'}{\norm{u}^p\norm{v'}^q} dv' =
  \frac{A}{8} \bigg[\frac{u}{\norm{u}^p} \underbrace{\int_{-u}^v \frac{dv'}{\norm{v'}^q}}_{I_1} +
  \frac{1}{\norm{u}^p} \underbrace{\int_{-u}^v \frac{-v' dv'}{\norm{v'}^q}}_{I_2} \bigg]
\end{equation}
We estimate the first integral
\begin{equation}
  I_1= \int_{-u}^v \frac{dv'}{(1+|v'|)^q} \leq 2 \int_{0}^u \frac{dv'}{(1+v')^q} \leq 
  2 \int_{0}^\infty \frac{dv'}{(1+v')^q} = \frac{2}{q-1}
\end{equation}
for $q>1$.
A better estimate can be obtained for the case $v<0$, but we want to have a universal formula for all values of $v$. 
%For $I_2$ we consider two cases: $v>0$
%\begin{equation}
%  I_2=-\left(\int_{-u}^0+\int_0^v\right) \frac{v' dv'}{(1+|v'|)^q} = 
%  -\int_{0}^u \frac{-v' dv'}{(1+v')^q} -\int_0^v \frac{v' dv'}{(1+v')^q} =
%  \int_{|v|}^u \frac{v' dv'}{(1+v')^q}
%\end{equation}
%and $v\leq 0$
%\begin{equation}
%  I_2=-\int_{-u}^{-|v|} \frac{v' dv'}{(1+|v'|)^q} = 
%  \int_{|v|}^u \frac{v' dv'}{(1+v')^q}. 
%\end{equation}
%Hence, we have one formula for all $v$ and we can easily integrate it to find
%\begin{equation}
%  I_2=\frac{1}{p-2}\left(\frac{1}{\norm{v}^{p-2}}-\frac{1}{\norm{u}^{p-2}}\right) -
%  \frac{1}{p-1}\left(\frac{1}{\norm{v}^{p-1}}-\frac{1}{\norm{u}^{p-1}}\right)
%\end{equation}
Since $I_1={\cal{O}}(1)$, it is enough to show the same for $I_2$. Better estimates are possible, but will not bring any advantage here. We can write
\begin{equation}
  I_2=\int_{-v}^{u} \frac{v' dv'}{(1+|v'|)^q} = 
  \underbrace{\int_{-v}^{|v|} \frac{v' dv'}{(1+|v'|)^q}}_{=0} + \int_{|v|}^{u} \frac{v' dv'}{(1+|v'|)^q} 
  \leq \int_{0}^{\infty} \frac{v' dv'}{(1+v')^q} = \frac{1}{(p-1)(p-2)}
\end{equation}
for $p>2$. 
Then, we get
\begin{equation}
  \d_u \psi(u,v) \leq 
  \frac{A}{8} \bigg[\frac{u}{\norm{u}^p} \frac{2}{q-1} +
  \frac{1}{\norm{u}^p} \frac{1}{(p-1)(p-2)} \bigg] \leq \frac{B}{\norm{u}^{p-1}}
\end{equation}
with $B:=\frac{1}{8}\left(\frac{2}{q-1}+\frac{1}{(p-1)(p-2)}\right)$.
As next we have
\begin{equation}
  \psi(u,v) = \int_{|v|}^u \d_u \psi(u',v) du' \leq B \int_{|v|}^u \frac{du'}{\norm{u'}^{p-1}}
\end{equation}
and we find
\begin{equation}
  \int_{|v|}^u \frac{du'}{\norm{u'}^{p-1}} = \frac{1}{p-2}
  \left[\frac{1}{\norm{v}^{p-2}} -  \frac{1}{\norm{u}^{p-2}}\right] =
  \frac{1}{p-2} \frac{1}{\norm{v}^{p-2}} \left[1 - \left(\frac{1+|v|}{1+u}\right)^{p-2}\right].
\end{equation}
We use an easy to prove inequality for $\nu>0$ and $0\leq x\leq 1$
\begin{equation}
  1-x^\nu \leq \max(1,\nu) (1-x)
\end{equation}
and get
\begin{equation}
  1 - \left(\frac{1+|v|}{1+u}\right)^{p-2} \leq 
  \max(1,p-2) \left(1 - \frac{1+|v|}{1+u}\right) =
  \max(1,p-2) \frac{2\min(t,r)}{\norm{u}}.
\end{equation}
It leads to
\begin{equation}
  \psi(u,v) \leq C \frac{\min(t,r)}{\norm{v}^{p-2}\norm{u}}
\end{equation}
with $C:=2 B \max\left(1,\frac{1}{p-2}\right)$.
Finally, we find
\begin{equation}
  \varphi(t,r) = \frac{1}{r} \psi(u,v) \leq C\frac{\min(t,r)}{r \norm{v}^{p-2}\norm{u}} 
  \leq \frac{C}{\norm{t-r}^{p-2}\norm{t+r}}.
\end{equation}
Thus we have an estimate for $\varphi(t,r)$ satisfying $\Box \varphi = G$ with $G$ given by \refa{source-bound-r}. Now, we use the comparison theorem and its corollary \ref{Cor:comparison}. Since, by \refa{source-bound}, $|F(t,x)|\leq G(t,|x|)$, we get $|\phi(t,x)|\leq \varphi(t,|x|)$ what, together with the last estimate, finishes the proof of lemma \ref{Lem:decay}.
\end{proof}

%%%%%%%%%%%%%%%%%%%%%%%%%%%%%%%%%%%%%%%%%%%%%%%%%%%%%%%%%%%%%%%%%%%%%%%%%%
\subsection{Proof of lemma \ref{Lem:decay-x}}
%%%%%%%%%%%%%%%%%%%%%%%%%%%%%%%%%%%%%%%%%%%%%%%%%%%%%%%%%%%%%%%%%%%%%%%%%%
\begin{proof}
Again, we first prove an estimate for the spherically symmetric $\varphi(t,r)$ solving \refa{wave-eq-r} with the source $G(t,r)$ which is now
\begin{equation} \label{source-bound-r-x}
  G(t,r)=\frac{A}{\norm{r}^\la\norm{t+r}^p\norm{t-r}^q}.
\end{equation}
In this case we have
\begin{equation}
\begin{split}
  \d_u \psi(u,v) &= 
  \frac{A}{8} \int_{-u}^v \frac{u-v'}{\norm{u-v'}^\la\norm{u}^p\norm{v'}^q} dv' \leq
  \frac{A}{8\norm{u}^p} \int_{-u}^v \frac{dv'}{\norm{u-v'}^{\la-1}\norm{v'}^q} \\ &\leq
  \frac{A}{8\norm{u}^p} \bigg[\underbrace{\int_{0}^u \frac{dv'}{\norm{u+v'}^{\la-1}\norm{v'}^q}}_{I_1}
  +  \underbrace{\int_{0}^{|v|} \frac{dv'}{\norm{u-v'}^{\la-1}\norm{v'}^q}}_{I_2} \bigg]
\end{split}
\end{equation}
We estimate the first integral
\begin{equation}
\begin{split}
  I_1 &= \frac{1}{(1+u)^{\la-1}} \int_{0}^u \frac{dv'}{\left(1+\frac{v'}{1+u}\right)^{\la-1}(1+v')^q}
  \leq \frac{1}{(1+u)^{\la-1}} \int_{0}^u \frac{dv'}{(1+v')^q} \\
  &\leq \frac{1}{(1+u)^{\la-1}} \int_{0}^\infty \frac{dv'}{(1+v')^q}
  = \frac{1}{(q-1)} \frac{1}{(1+u)^{\la-1}}
\end{split}
\end{equation}
for $q>1$. As before, a better estimate can be obtained for the case $v<0$, but we want to have a universal formula for all values of $v$. We estimate the second integral introducing $\mu:=\min(q,\la-1)$
\begin{equation}
\begin{split}
  I_2 &\leq \int_{0}^{|v|} \frac{dv'}{(1+u-v')^\mu(1+v')^\mu}
  = \int_{0}^{|v|} \frac{dv'}{[1+u+v'(u-v')]^\mu} \\
  &= \frac{1}{(1+u)^\mu} \int_{0}^{|v|} \frac{dv'}{\left(1+\frac{v'(u-v')}{1+u}\right)^\mu}
  \leq \frac{1}{(1+u)^\mu}\; 2 \int_{0}^{u/2} \frac{dv'}{\left(1+\frac{v'(u-v')}{1+u}\right)^\mu}
\end{split}
\end{equation}
because $v'(u-v')$ is positive for $0\leq v' \leq u$ and symmetric around $v'=u/2$. Further, using the fact that $v'(u-v')\geq uv'/2$ for $v'\leq u/2$, we have
\begin{equation}
\begin{split}
  I_2 &\leq \frac{2}{(1+u)^\mu} \int_{0}^{u/2} \frac{dv'}{\left(1+\frac{u v'}{2(1+u)}\right)^\mu}
\end{split}
\end{equation}
and the last integral can be split into two and estimated by
\begin{equation}
\begin{split}
  &2 \int_{0}^{u/2} \frac{dv'}{\left(1+\frac{u v'}{2(1+u)}\right)^\mu} =
  4\left(1+\frac{1}{u}\right) \int_{0}^{\frac{u^2}{4(1+u)}} \frac{dw}{\left(1+w\right)^\mu} \\
  &\leq  4 \int_{0}^{\infty} \frac{dw}{\left(1+w\right)^\mu} + 
  \int_{0}^{\frac{u}{1+u}} \frac{dw}{\left(1+\frac{u}{4}w\right)^\mu} 
  \leq \frac{4}{\mu-1} + \frac{u}{1+u} \leq \frac{4}{\mu-1} + 1
\end{split}
\end{equation}
for $\mu>1$. 
Then, we get
\begin{equation}
  \d_u \psi(u,v) \leq 
  \frac{A}{8\norm{u}^p} \bigg[\frac{1}{(q-1)} \frac{1}{(1+u)^{\la-1}} +
  \frac{\frac{4}{\mu-1} + 1}{(1+u)^\mu} \bigg] \leq \frac{B}{\norm{u}^{p+\mu}}
\end{equation}
with $B:=\frac{1}{8}\left(1+\frac{1}{q-1}+\frac{4}{\mu-1}\right) \leq \frac{1}{8}\left(1+\frac{5}{\mu-1}\right)$. As next we have
\begin{equation}
  \psi(u,v) = \int_{|v|}^u \d_u \psi(u',v) du' \leq B \int_{|v|}^u \frac{du'}{\norm{u'}^{p+\mu}}
  = \frac{B}{(p+\mu-1)} \left[\frac{1}{\norm{v}^{p+\mu-1}} -  \frac{1}{\norm{u}^{p+\mu-1}}\right] 
\end{equation}
for $p+\mu>1$, what can be estimated, like in the proof of lemma \ref{Lem:decay}, by
\begin{equation}
  \psi(u,v) \leq \frac{\max(1,p+\mu-1) B}{(p+\mu-1)} \frac{2\min(t,r)}{\norm{v}^{p+\mu-1}\norm{u}}
  \leq 2 B \frac{\min(t,r)}{\norm{v}^{p+\mu-1}\norm{u}}.
\end{equation}
Finally, we find
\begin{equation}
  \varphi(t,r) = \frac{1}{r} \psi(u,v) \leq 2B\frac{\min(t,r)}{r \norm{v}^{p+\mu-1}\norm{u}} 
  \leq \frac{C}{\norm{t-r}^{p+\mu-1}\norm{t+r}},
\end{equation}
where $C:=\frac{1}{4}\left(1+\frac{1}{q-1}+\frac{4}{\mu-1}\right)$.
Thus we have an estimate for $\varphi(t,r)$ satisfying $\Box \varphi = G$ with $G$ given by \refa{source-bound-r-x}. Corollary \ref{Cor:comparison} implies that from \refa{source-bound-x}, i.e. $|F(t,x)|\leq G(t,|x|)$, follows $|\phi(t,x)|\leq \varphi(t,|x|)$ what, together with the last estimate, finishes the proof of lemma \ref{Lem:decay-x}.
\end{proof}

%%%%%%%%%%%%%%%%%%%%%%%%%%%%%%%%%%%%%%%%%%%%%%%%%%%%%%%%%%%%%%%%%%%%%%%%%%
\section{Applications}
%%%%%%%%%%%%%%%%%%%%%%%%%%%%%%%%%%%%%%%%%%%%%%%%%%%%%%%%%%%%%%%%%%%%%%%%%%
\subsection{Lemma \ref{Lem:decay} -- nonlinear wave equation}

We only briefly sketch the main application of lemma \ref{Lem:decay}, referring the reader interested in the details to the literature addressing the concrete problems \cite{John-blowup, Asakura, Strauss-T, Georg-H-K}. One of them is the main step in the proof of existence and decay of solutions to a nonlinear wave equation
\begin{equation} \label{wave-eq-nonlin}
  \Box \phi = F(\phi)
\end{equation}
where the nonlinear term is such that $|F(\phi)|\leq A |\phi|^p$ and $p>1+\sqrt{2}$. The proof usually uses iteration $\Box \phi_{n+1} = F(\phi_n)$ and induction in $n$, where one assumes
\begin{equation}
  |\phi_n| \leq \frac{C_n}{\norm{t+|x|}\norm{t-|x|}^{\la}}
\end{equation}
and must show the same estimate for $\phi_{n+1}$. Then one has
\begin{equation}
  |F(\phi_n)|\leq \frac{A (C_n)^p}{\norm{t+|x|}^p \norm{t-|x|}^{p\,\la}}
\end{equation}
and by lemma \ref{Lem:decay} one gets for $p>2$ and $q:=p\,\la>1$
\begin{equation}
  |\phi_{n+1}| \leq \frac{C_{n+1}}{\norm{t+|x|}\norm{t-|x|}^{p-2}}.
\end{equation}
The iteration closes when $p-2\geq\la$. Choosing the optimal (biggest) value $\la:=p-2$ the remaining conditions reduce to $p\,(p-2)>1$, $p>2$ what gives $p>1+\sqrt{2}$. 
Then, one needs only to show that the sequence $\phi_n$ is Cauchy w.r.t. the norm $\|\phi\|_\Linftx{p} := \|\norm{t+|x|}\norm{t-|x|}^{p-1}\phi(t,x)\|_\LinfTX$ and since the normed space (a weighted-$\Linf$ space) is Banach, the sequence $\phi_n$ converges to a solution of the nonlinear wave equation \refa{wave-eq-nonlin}.

%%%%%%%%%%%%%%%%%%%%%%%%%%%%%%%%%%%%%%%%%%%%%%%%%%%%%%%%%%%%%%%%%%%%%%%%%%
\subsection{Lemma \ref{Lem:decay-x} -- linear wave equation with potential}

Here, we sketch an important application of lemma \ref{Lem:decay-x}, which is the main estimate in the proof of existence and decay of solutions to a wave equation with a potential term (referring the reader to \cite{Strauss-T, Georg-H-K, NS-WaveDecay} for the details) 
\begin{equation} \label{wave-eq-pot}
  \Box \phi + V u = 0
\end{equation}
where the potential is bounded by $|V(x)|\leq V_0/\norm{x}^\la$ with $\la>2$. The proof again uses iteration $\Box \phi_{n+1} = -V \phi_n$ and induction in $n$, where one analogously assumes
\begin{equation}
  |\phi_n| \leq \frac{C_n}{\norm{t+|x|}\norm{t-|x|}^{q-1}}
\end{equation}
and must show the same for $\phi_{n+1}$. Then 
\begin{equation}
  |V \phi_n|\leq \frac{V_0 C_n}{\norm{x}^\la \norm{t+|x|} \norm{t-|x|}^{q-1}}
\end{equation}
and by lemma \ref{Lem:decay-x} one gets for $\la>2$, $q-1>1$ and $p:=1$
\begin{equation}
  |\phi_{n+1}| \leq \frac{C_{n+1}}{\norm{t+|x|}\norm{t-|x|}^\nu},
\end{equation}
where $\nu:=\min(q-1,\la-1)$. The iteration closes when $\nu\geq q-1$, i.e. when $q\leq \la$ (the optimal value is $q:=\la$).
Then, again, it remains only to show that the sequence $\phi_n$ is Cauchy w.r.t. the norm $\|\phi\|_\Linftx{q} := \|\norm{t+|x|}\norm{t-|x|}^{q-1}\phi(t,x)\|_\LinfTX$ and then $\phi_n$ converges to a solution of the wave equation with potential \refa{wave-eq-pot}.

%%%%%%%%%%%%%%%%%%%%%%%%%%%%%%%%%%%%%%%%%%%%%%%%%%%%%%%%%%%%%%%%%%%%%%%%%%%%%%%
% \bibliography{basename of .bib file}
\bibliography{QNMs}
\bibliographystyle{unsrt}

%%%%%%%%%%%%%%%%%%%%%%%%%%%%%%%%%%%%%%%%%%%%%%%%%%%%%%%%%%%%%%%%%%%%%%%%%%
\end{document}